\newcommand{\be}{\begin{equation}}
\newcommand{\ee}{\end{equation}}
\newcommand{\bea}{\begin{eqnarray}}
\newcommand{\eea}{\end{eqnarray}}
\newcommand{\bes}{\begin{eqnarray*}}
\newcommand{\ees}{\end{eqnarray*}}
\newtheorem {theorem}{Theorem}
\newtheorem {corollary}{Corollary}
\newtheorem {notation}{Remark}
\begin{document}
\title{On the End-to-End Distortion for a Buffered Transmission over Fading Channel}
\author{Qiang Li and C. N. Georghiades
\thanks{The material in this paper was presented
in part at the International Symposium on Information Theory (ISIT),
Seattle, WA, July 2004}. \thanks{The authors are with the Department
of Electrical and Computer Engineering, Texas A\&M University,
College Station, TX, USA. E-mail: \{qiangli,
georghiades\}@ece.tamu.edu.}}\centerfigcaptionstrue \maketitle
\thispagestyle{empty}

\begin{abstract}
In this paper, we study the end-to-end distortion/delay tradeoff for
a analogue source transmitted over a fading channel. The analogue
source is quantized and stored in a buffer until it is transmitted.
There are two extreme cases as far as buffer delay is concerned: no
delay and infinite delay. We observe that there is a significant
power gain by introducing a buffer delay. Our goal is to investigate
the situation between these two extremes. Using recently proposed
\emph{effective capacity} concept, we derive a closed-form formula
for this tradeoff. For SISO case, an asymptotically tight upper
bound for our distortion-delay curve is derived, which approaches to
the infinite delay lower bound as $\mathcal{D}_\infty
\exp(\frac{C}{\tau_n})$, with $\tau_n$ is the normalized delay, $C$
is a constant. For more general MIMO channel, we computed the
distortion SNR exponent -- the exponential decay rate of the
expected distortion in the high SNR regime. Numerical results
demonstrate that introduction of a small amount delay can save
significant transmission power.
\end{abstract}

\newpage

\pagenumbering{arabic}
\section{Introduction}
Quality-of-Service (QoS) is a critical design objective for
next-generation wireless communication system. In general, the data,
voice and multimedia transmission over packet cellular networks,
wireless LAN or sensor networks involves the analogue observations
are transmitted to the end user over a wireless link. End-to-End
distortion and transmission delay are two fundamental QoS metrics.
Such QoS requirements pose a challenge for the system design due to
the unreliability and time varying nature of the wireless link.

\begin{figure}[h]
  \centerline{
  \scalebox{0.7}{\includegraphics{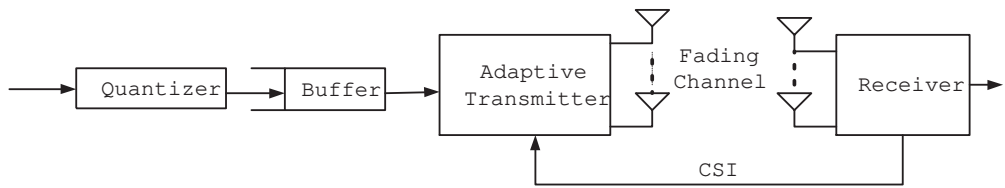}}}
  \caption{System model}\label{fig1}
\end{figure}

 In this paper, we consider transmission of an analogue source over a
wireless time-varying fading channel. Our goal is to optimize the
end-to-end distortion given a delay constraint. We first focus on
the single antenna case (SISO) and derive the distortion and delay
tradeoff for the wireless fading channel. We then extended our model
to multiple input and multiple output (MIMO) block Rayleigh fading
channel. We compute the SNR exponent \cite{CN05} for the buffered
transmission. To this end, we adopt a cross-layer approach shown in
Figure 1. At this point, for simplicity we assume an independent and
identically distributed (i.i.d.) block fading channel model. Such a
model is suitable for serval practical communication scenarios,
e.g., time hopping in TDMA, frequency hopping in FDMA and
multicarrier systems. Extension to more practical time-correlated
case will be discussed later. Throughout this paper, we always
assume channel state information (CSI) is perfectly known at the
receiver and the transmitter only know the instantaneous channel
capacity via a feedback link (transmitter don't need to know the
exact channel realization).

We consider an i.i.d. complex memoryless Gaussian source $\sim
\mathcal{CN}(0,1)$, which is quantized it and then fed into a
buffer. Since the channel is time-varying, the transmitter adjust
the transmission rate to the current channel status. The relevant
performance criteria are the end-to-end quadratic distortion and the
buffer delay. We aim to find the relationship between the distortion
and delay for some average transmission power. The Gaussian source
is a good approximation of more general source distribution in high
resolution regime \cite{LMWA05,GE07}. We assume that each group of
$K$ source samples is tranmsitted over $N$ channel uses on average.
We define the corresponding \emph{bandwidth ratio} as
\begin{equation}\label{eq:bandratio}
  \eta = \frac{N}{K},
\end{equation}
where $K$ is large enough to consider the source as ergodic and $N$
is large enough to design codes that can achieve the instantaneous
channel capacity. Our tools here are the large deviation theory and
information theory.

Recently, some researchers have considered such end-to-end quadratic
distortion as the performance criteria. In \cite{HG05}, Holliday and
Goldsmith first investigated the end-to-end distortion for the MIMO
block fading channel, based on the source-channel separation theorem
and Zheng and Tse's diversity-multiplexing trade-off. And they also
incorporated the delay consideration into their model using ARQ
argument, which is different from our approach. In \cite{LMWA05},
Laneman et al., considered the problem of minimum average distortion
transmission over parallelled channels. They introduced the
distortion SNR exponent as a figure of merit for high SNR value, and
compared the multiple description source coding diversity and
channel coding diversity. Caire and Narayanan \cite{CN05} pointed
out the the separation theorem does not hold for delay constrained
and the unknown channel at the transmitter end, they investigated
the SNR exponent of the distortion function in high SNR regime for
this problem, an upper bound and lower-bound for the distortion SNR
exponent were derived. \cite{GE07} Gunduz and Erkip extended their
results by a layered broadcast transmission scheme. For some
bandwidth ratio, the optimum SNR exponent is achieved.

For the combination of queuing and information theories, in
\cite{WN03}, Wu and Negi, first proposed the concept of effective
capacity, which is an extension of Shannon's capacity by
incorporating into the buffer delay. The effective capacity is the
dual of the Chang's effective bandwidth \cite{CT95} in the network
literature. Negi and Goel \cite{NG04} unified the effective capacity
with error exponent for more practical considerations. A QOS-aware
rate and power control algorithm for wireless fading channel was
proposed by Tang and Zhang \cite{TZ06}.

For buffered transmission, Berry and Gallager investigated the power
and delay tradeoff for communication over fading channel
\cite{BG02}. In \cite{Tse94}, Tse analyzed the distortion for a
fixed line networks, but with adaptive quantizer.

The rest of this paper is organized as follows: in Section II, we
state the problem and show inserting a buffer can save significant
power. We introduce the system model and some preliminaries of the
effective capacity in Section III. Section IV develops our main
results--distortion-delay function and an upper bound for SISO
channel, some asymptotic analysis is provided. In Section V, We
extend the distortion analysis to MIMO channel, and the SNR exponent
for buffered transmission is derived. Distortion-delay for large
antenna MIMO channel is also derived by utilizing the mutual
information Gaussian approximation. Finally, Section VII concludes
the paper.

Throughout this paper, normal letters indicate scalar quantities and
boldface fonts denote matrices and vectors. For any matrix
$\mathbf{M}$ we write its transpose as $\mathbf{M}^{T}$ and
$\mathbf{M}^{H}$ is its conjugate transpose. $x^*$ denotes the
conjugate of $x$. $\ln(\cdot)$ and $\log(\cdot)$ represent the
natural and $2$ based logarithm.

\section{Problem Statement}
 For buffered transmission over the fading channel, there are two extreme cases: 1) There is no
buffer --- no delay, 2) we have an infinite buffer size, i.e., we
allow an infinite transmission delay. For the first case, we
adaptively quantize the Gaussian source according to the CSI.
Assuming perfect transmission, we can approximate the average
achievable quadratic distortion by:
\begin{equation}\label{eq:1}
  \mathcal{D}_{0}(\rho) = \textrm{E}[\exp(-\eta\ln(1+ |h|^2 \frac{P}{N_0W})]~,
\end{equation}
where $P$ denotes the transmission power, $W$ and $N_0$ resent the
bandwidth and noise variance; $h$ is the channel gain, a random
variable with unit variance follow a certain statistical
distribution. Here, we have used the information theoretical
results: Gaussian distortion-rate function can be express as
$\mathcal{D}(R_s)=\exp(-\eta R_c)$ and $C(\rho)=\log(1+|h|^2 \rho)$
is the instantaneous channel capacity-cost function.
 For infinite delay case, the average transmission rate can achieve the ergodic capacity
of a fading channel and the quantizer can simply adopt a constant
output rate. The average distortion is given by:
\begin{equation}\label{eq:2}
  \mathcal{D}_{\infty}(\rho) = \exp(-\eta\,\textrm{E}[\ln(1+ |h|^2 \frac{P}{N_0W})])~.
\end{equation}
The function $\exp(-(\cdot))$ is a covex function. Due to Jensen's
inequality, the distortion $\mathcal{D}_0$ is low bounded by
$\mathcal{D}_\infty$, i.e., $\mathcal{D}_0\geq \mathcal{D}_\infty$.
The two distortion functions are plotted in Figure 2 for a Rayleigh
fading channel. Notice that there is a gap between no-delay and
infinite delay curves. We can call this transmission power gap as
``Jensen's gain''. Note, we assume $\eta = 2$ and a complex Gaussian
source, this is equivalent to a real source with bandwidth ratio of
one.
\begin{figure}
\begin{center}
\includegraphics[width=5.5in,height=3.5in]{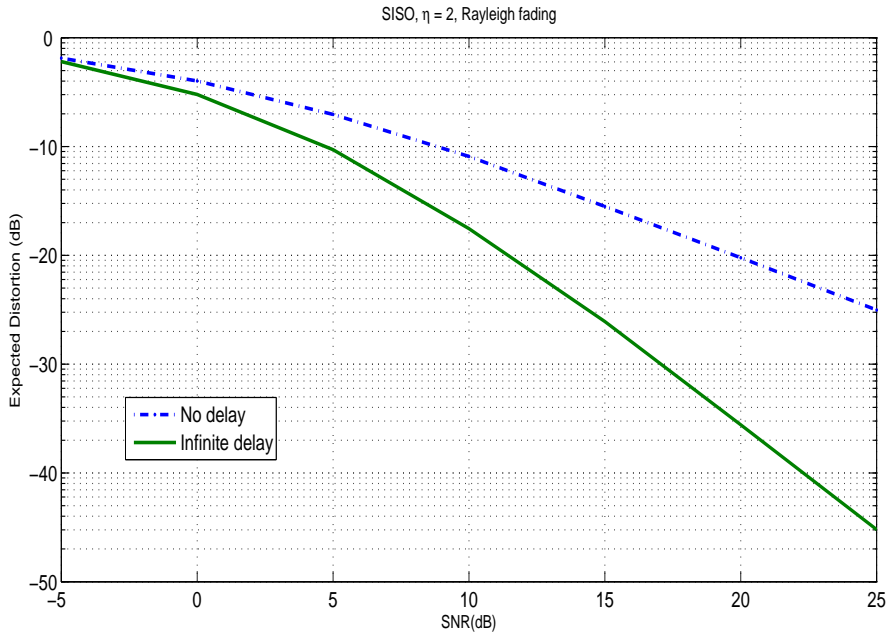}
\caption{Distortion of Gaussian Source Transmitted over i.i.d.
Rayleigh fading Channel.  } \label{Fig:2}
\end{center}
\end{figure}
So introducing a buffer at the transmitter to match the source rate
with the instantaneous quality of the channel can save lots of
transmission power to meet some distortion requirement. Also, we
have simplified the quantization step (constant rate). A natural
question is therefore:
 if we only allow a finite delay or buffer, how much gain can we
achieve? How fast does the distortion curve converge to the
infinite-delay lower bound as the delay increases? One of the the
main result of this paper is a clear characterization the tradeoff
between end-to-end quadratic distortion and delay, which provides
insights to the impact of the buffer delay on the achieved
distortion function of the memoryless analogue source transmitted
over a wireless fading channel.

To answer the question raised earlier, we combine the ideas from the
fields of queuing theory and communication/information theory to
analyze the above problem. The tool we use here is the concept of
\emph{effective capacity} \cite{WN03}, which is the dual of
\emph{effective bandwidth} in networking literature. The effective
capacity synthesizes the channel statistics and QoS metric (delay
and buffer overflow) into a single function using large deviation
theory. It is a powerful and unified approach to study the
statistical QoS performance of wireless transmission where the
service process is time-varying. For i.i.d. SISO block fading
channels we derived a closed-form expression for the
distortion-delay curve, which is hard to analyze due to some
mathematically intractable special functions. Then we give out a
tight upper bound for this distortion-delay function to
theoretically and asymptotically analyze the convergence behavior.

In Fig 2., we find the power gain is marginal for low SNR. As the
SNR value increases, the gain becomes significant. This is because
the $\exp(\cdot)$ and $\log(\cdot)$ functions are approximately
linear in the low SNR regime. Hence, the ``Jensen's gain'' is
negligible at low SNR. We can view the slope of the distortion--SNR
curve as a similarity of the diversity order for the bit error rate
in the wireless communication. Therefore, we will investigate the
distortion SNR exponent for a buffered transmission. Introducing a
buffer can provides some kind of time diversity. For the MIMO
channel, besides the time diversity, we also have space diversity.
We will look into the interplay between these two diversities and
the impact of buffer on the SNR exponent.

\section{System Model}
The system model is illustrated in Figure 1. We have an i.i.d.
complex Gaussian source $\sim \mathcal{CN}(0,1)$ with total
bandwidth $B_w$. We quantize the source samples using vector
quantizer or trellis coded quantizer (TCQ). The quantization operate
every $K$ samples a time and fed into a buffer with size $B$ bits.
Let the $K$ samples have time duration $T_f$, so each frame have
$T_f\times B_w\times R_s = K \cdot R_s$ bits, where $R_s$ bits is
number of bits into which each Gaussian sample is quantized. $K$ is
large enough to ensure ergodic of the source.

We assume a MIMO i.i.d. block fading channel with $M_t$ transmit and
$M_r$ receive antennas. The SISO, MISO and SIMO are special cases of
this general model. The channel model can be expressed as:
\begin{equation}\label{eq:MIMO}
\mathbf{y}_i = \sqrt{\frac{\rho}{M_t}}\mathbf{Hx}_i + \mathbf{w}_i,
\quad \quad i = 1,\cdots,N
\end{equation}
Where $\mathbf{H}$ is the channel matrix containing i.i.d. elements
$h_{i,j}\sim \mathcal{CN}(0,1)$ (Rayleigh independent fading).
$\mathbf{x}_i$ is the transmitted signal at time $i$, the codeword
$\mathbf{X} = [\mathbf{x}_1,\cdots,\mathbf{x}_N] \in
\mathbb{C}^{M_t\times N}$ is normalized so that is satisfies
$\text{tr}(\mathbb{E}[\mathbf{X}^H\mathbf{X}])\leq M_tN$. $\rho$
denotes the signal-to-noise ratio (SNR), defined as the ratio of the
average received signal energy per receiving antenna to the noise
per-component variance. $\mathbf{Z} =
[\mathbf{z}_1,\cdots,\mathbf{z}_N] \in \mathbb{C}^{M_r\times N}$ is
the complex additive Gaussian noise with i.i.d. entries
$\mathcal{CN}(0,1)$. We define $M_*= min(M_t,M_r)$ and $M^* =
max(M_t,M_r)$.


\subsection{Effective Capacity}
The key idea of effective capacity is that, for a dynamic queuing
system with stationary ergodic arrival and service process, the
queue length $Q(t)$ converges in distribution to a random variable
$Q(\infty)$. The probability of queue length exceeding a certain
threshold $B$ decays exponentially fast as the threshold $B$
increases \cite{WN03}. Mathematically,
\begin{equation}\label{eq:3}
\lim_{B \rightarrow \infty}\frac{-1}{B}\ln Pr\{Q(\infty)>B
\}=\theta~,
\end{equation}
where $\theta$ is the QoS parameter decided by the delay requirement
of the queue system. A large value of $\theta$ leads to a stringent
delay requirement, i.e., small delay. In particular, as $\theta$
goes to $\infty$, the system can not tolerate any delay. On the
other end, when $\theta$ goes to $0$, the system can tolerate an
arbitrarily delay.

Let the sequence $\{R[i], i=1,2,\ldots\}$ denote the discrete-time
instantaneous channel capacity, which is a stationary and ergodic
stochastic process. Define
\begin{equation}\label{eq:4}
S[t] \triangleq \sum_{i=1}^{t}R[i]
\end{equation}
as the accumulate service provided by the channel. Assume the
G$\ddot{a}$rtner-Ellis limit of $S[t]$:
\begin{equation}\label{eq:5}
\Lambda_C(\theta) \triangleq
\lim_{t\rightarrow\infty}\frac{1}{t}\ln\text{E}\Big\{e^{\theta
S[t]}\Big\}, \quad\forall\,\theta >0
\end{equation}
exits and is a convex function differentiable for all real $\theta$.
Then, the effective capacity with delay constraint decided by
$\theta$ is defined as
\begin{equation}\label{eq:6}
E_C(\theta) \triangleq -\frac{\Lambda_C(-\theta)}{\theta}
=-\lim_{t\rightarrow\infty}\frac{1}{\theta t}\ln
\text{E}\Big\{e^{-\theta S[t]}\Big\}~.
\end{equation}
In particular, for i.i.d. cases, the effective capacity simply
reduces to the ratio of log-moment generating function of the
instantaneous channel capacity to the exponent $\theta$
\begin{equation}\label{eq:eff_cap}
E_C(\theta) = -\frac{1}{\theta}\ln\text{E}\Big\{e^{-\theta
R[t]}\Big\}~.
\end{equation}
The effective capacity falls into the large deviation framework,
which is asymptotically valid for a large queue size.
\section{Distortion-Delay Function} We will derive the closed-form
expression for the end-to-end quadratic distortion given the delay
constraint in this section. The starting point is vector
quantization and delay bound violation probability using effective
capacity. For a Gaussian source vector $\mathbf{u}$ with  $K$
samples that has support on $\mathbb{C}^K$, a $KR_s$-nats quantizer
is applied to $\mathbf{u}$ via a mapping $\mathbf{u}\rightarrow
\tilde{\mathbf{u}}$. The cardinality of discrete set
$\tilde{\mathbf{u}}$ is $e^{KR_s}$. Define the average quadratic
distortion by
\begin{equation}\label{eq:8}
\mathcal{D}_Q(R_s) \triangleq \frac{1}{K}
\text{E}[|\mathbf{u}-\tilde{\mathbf{u}}|^2]~,
\end{equation}
where the expectation is with respect to $\mathbf{u}$. According to
the distortion-rate theory, the distortion function
$D_Q(R_s)=\exp(-R_s)$ is achievable for a complex Guassian source.
The quantized bits are transmitted over a statistical channel, let
$P_e$ denote the error probability of this channel. It has been
shown in \cite{HZ97} that the achievable end-to-end distortion for
such tandem scheme is upper bounded by
\begin{equation}\label{eq:9}
\mathcal{D}_{e-e}(R_s) \leq \mathcal{D}_Q(R_s) + O(1)P_e ~.
\end{equation}
For our problem, if we assume using Gaussian code to achieve the
instantaneous capacity, the delay bound violation (buffer overflow)
probability will dominate the decoding error probability. From the
effective capacity theory, we have the following approximation for
$P_e$:
\begin{equation}\label{eq:10}
P_e \triangleq P_r\{Q(\infty)\geq B\}\thickapprox \kappa e^{-\theta
B}~,
\end{equation}
where $\theta$ is the QoS parameter, B is the buffer size; $\kappa$
is a constant that denotes the probability that the buffer is
non-empty. $\kappa$ is large compared with $P_e$. Given the delay
constraint at $\tau$ seconds, using Little's theorem, we have
following result: $B = Rs\times B_w \times \tau$. $B_w$ is the
source bandwidth. Substitute (\ref{eq:10}), $B$ and $D_Q(R_s)$ into
(\ref{eq:9}), we may write the bound on the end-to-end distortion as
\begin{equation}\label{eq:ee_dist}
\mathcal{D}_{e-e}(R_s) \leq \exp(-R_s) + O(1)\kappa \exp(-\theta B_w
R_s \tau) ~.
\end{equation}
In order to get analytical results, we consider the asymptotically
large delay and high SNR regime, i.e., small distortion. We can
optimize the end-to-end distortion by choosing the two exponents
equal to each other (exponential order tight). As a result, we have
$\theta = \frac{1}{B_w \tau}$.

If we assume the transmitter don't know the channel realization, but
know the value of instantaneous capacity via the feedback link. The
instantaneous capacity can be achieved by the Gaussian codebook. We
have following theorem.
\begin{theorem}\label{th:th1}
Given a delay $\tau = \frac{1}{B_w \theta }$ and bandwidth raio
$\eta$, the distortion upper bound function of the i.i.d MIMO block
fading channel can be expressed as:
\begin{align}\label{eq:th1_1}
\mathcal{D}(\theta) \leq
\Big[\mathbf{B}^{-1}\det[\mathbf{G}(\theta)]\Big]^{\frac{1}{K\theta}}~.
\end{align}
where $\mathbf{B}= \prod^{M_*}_{i=1}\Gamma(d+i)$, and $d=M^*-M_*$.
And $\mathbf{G}$ is $M_*\times M_*$ Hankel matrix whose $(i,j)$th
entry is defined to be
\begin{equation}\label{eq:th1_2}
g_{i,j} = \int^{\infty}_0
\Big(1+\frac{\rho}{M_t}\lambda\Big)^{-\theta K
\eta}\lambda^{i+j+d}e^{-\lambda}d\lambda, \quad i,j =
0,\cdots,M_*-1~.
\end{equation}
$\Gamma$ is the complete Gamma function.
\end{theorem}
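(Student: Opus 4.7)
The plan is to combine the exponent-matched distortion bound (\ref{eq:ee_dist}) with a closed-form evaluation of the effective capacity for the i.i.d.\ MIMO Rayleigh block fading channel. First I would fix the source rate by matching the average arrival rate into the buffer to the effective capacity of the channel: setting $KR_s=E_C(\theta)$ and recalling the i.i.d.\ form (\ref{eq:eff_cap}), we get $R_s=-\frac{1}{K\theta}\ln \mathrm{E}\{e^{-\theta R[t]}\}$. With the exponent-matching choice $\theta=1/(B_w\tau)$ already established before the theorem, the right-hand side of (\ref{eq:ee_dist}) is of the same exponential order as $\exp(-R_s)$, so the bound reduces to
\[
\mathcal{D}(\theta)\;\leq\;\exp(-R_s)\;=\;\bigl[\mathrm{E}\{e^{-\theta R[t]}\}\bigr]^{1/(K\theta)}.
\]

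Next I would instantiate $R[t]$ as the per-frame mutual information achieved by Gaussian input on (\ref{eq:MIMO}). Since each frame contains $N=\eta K$ channel uses, $R[t]=\eta K\,\ln\det\!\bigl(\mathbf{I}+\tfrac{\rho}{M_t}\mathbf{H}\mathbf{H}^{H}\bigr)$ in nats per frame. Letting $\lambda_1,\ldots,\lambda_{M_*}$ denote the nonzero eigenvalues of $\mathbf{H}\mathbf{H}^{H}$ (equivalently $\mathbf{H}^{H}\mathbf{H}$), the determinant factorizes and
\[
\mathrm{E}\{e^{-\theta R[t]}\}\;=\;\mathrm{E}\Bigl[\prod_{i=1}^{M_*}\bigl(1+\tfrac{\rho}{M_t}\lambda_i\bigr)^{-\theta K\eta}\Bigr].
\]
Note the exponent $-\theta K\eta$ matches exactly the exponent appearing in $g_{i,j}$ of (\ref{eq:th1_2}), which is the structural hint that we are on the right track.

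The third and main step is to evaluate the $M_*$-fold expectation against the joint density of nonzero Wishart eigenvalues, $p(\lambda)\propto \prod_i \lambda_i^{d} e^{-\lambda_i}\,\prod_{i<j}(\lambda_i-\lambda_j)^{2}$, via Andreief's identity. Choosing $\phi_j(\lambda)=\lambda^{j-1}$ and $\psi_j(\lambda)=\lambda^{j-1+d}(1+\tfrac{\rho}{M_t}\lambda)^{-\theta K\eta}e^{-\lambda}$, the squared Vandermonde factors as $\det[\phi_j(\lambda_i)]\cdot\det[\psi_j(\lambda_i)]/\prod_i\psi$-weights, and Andreief converts the $M_*$-fold integral to the determinant of the scalar moments $g_{i,j}=\int_0^\infty \lambda^{i+j+d}(1+\tfrac{\rho}{M_t}\lambda)^{-\theta K\eta}e^{-\lambda}d\lambda$. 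The combinatorial factor $M_*!$ produced by Andreief cancels against the $1/M_*!$ in the symmetrized joint density, so that only the remaining Selberg-type constant $\mathbf{B}=\prod_{i=1}^{M_*}\Gamma(d+i)$ survives as the normalization. Raising the result to the $1/(K\theta)$ power yields (\ref{eq:th1_1}).

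The main obstacle I expect is bookkeeping rather than conceptual: tracking the normalization constants between the unordered Wishart density, the Andreief combinatorial factor, and the paper's stated $\mathbf{B}$, and verifying that the integrals $g_{i,j}$ converge (which they do because $(1+\tfrac{\rho}{M_t}\lambda)^{-\theta K\eta}$ is bounded and $\lambda^{i+j+d}e^{-\lambda}$ is integrable). The inequality itself is already built into (\ref{eq:ee_dist})--(\ref{eq:10}), so no further estimation is required beyond the closed-form evaluation of the MGF-like expectation.
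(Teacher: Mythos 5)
Your proposal follows essentially the same route as the paper: plug the per-frame MIMO mutual information $R[t]=K\eta\ln\det(\mathbf{I}+\tfrac{\rho}{M_t}\mathbf{HH}^H)$ into the exponent-matched effective-capacity bound (\ref{eq:ee_dist})--(\ref{eq:eff_cap}), reduce the resulting expectation to an integral against the Wishart eigenvalue density, and evaluate it as the Hankel determinant $\det[\mathbf{G}(\theta)]$ --- the only difference being that the paper delegates this last step to the results of \cite{WG04}, whereas you carry it out explicitly via Andreief's identity, which is precisely the machinery behind that citation. One bookkeeping caution on the step you yourself flagged: a careful Andreief computation with monomial $\phi_j,\psi_j$ yields the Laguerre-ensemble normalization $\prod_{i=1}^{M_*}\Gamma(i)\,\Gamma(d+i)$ rather than the stated $\mathbf{B}=\prod_{i=1}^{M_*}\Gamma(d+i)$ (the two coincide for $M_*\leq 2$, covering the SISO/SIMO/MISO and $2\times2$ cases used in the paper), but this discrepancy is a constant independent of $\rho$ and $\theta$ and does not alter the structure of the bound.
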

\begin{proof}
The Mutual information for the each MIMO block transmission can be
expressed as:
\begin{equation}\label{eq:th1_3}
R_s(\mathbf{H}) = K\eta \cdot
\ln\det\Big(\mathbf{I}+\frac{\rho}{M_t}\mathbf{HH}^H\Big)
\end{equation}
plug into equation (\ref{eq:eff_cap}) and (\ref{eq:ee_dist}), we
have
\begin{align}\label{eq:th1_4}
\mathcal{D}(\theta) &\leq \bigg\{\textrm{E}\bigg[
\det\Big(\mathbf{I}+\frac{\rho}{M_t}\mathbf{HH}^H\Big)\bigg]^{-\theta
K \eta}\bigg\}^{\frac{1}{\theta K}} \notag\\
&=\bigg\{\int^{\infty}_{0}\prod
\Big(1+\frac{\rho}{M_t}\lambda_i\Big)^{-\theta K
\eta}f(\boldsymbol{\lambda})d\boldsymbol{\lambda}\bigg\}^{\frac{1}{\theta
K}}~.
\end{align}
Where $0\leq \lambda_1 \leq \cdots \leq \lambda_{M_*}$ denote the
ordered eigenvalues of $\mathbf{HH}^H$. The joint distribution of
the $\lambda_i$'s follows the Wishart pdf given by
\begin{equation}\label{eq:th1_5}
f(\boldsymbol{\lambda})
=K^{-1}_{M_t,M_r}\prod^{M_*}_{i=1}\lambda_i^{M^*-M_*}\prod_{i<j}(\lambda_i-\lambda_j)^2\exp\Big(-\sum_i
\lambda_i \Big)~,
\end{equation}
where $K_{M_t,M_r}$ is a normalization constant. Follow the results
of \cite{WG04}, we can get the distortion function as
(\ref{eq:th1_1}).
\end{proof}

\noindent \textbf{Remarks}
\begin{itemize}
  \item If we assume the quantization process is independent of the channel
  status, we can show the the constant quantization rate is the optimum
  one. First, for a buffered system with independent arrival and departure processes, the constant
  arrival processe is optimal with respect to the buffer overflow probability, for all the arrival processes that have
  the same average rate \cite{CT95}. Second, given a buffer overflow probability, constant
  rate quantization will minimize the distortion according to the
  Jensen's inequality. Therefore, constant rate quantization is optimal if
  the quantization process is independent of the channel mutual information.
  Another advantage of constant rate quantization is to reduce the
  quantizer design complexity.
  \item When the quantizer rate selection is according to the buffer
  state status. We can not prove the constant rate quantization is optimal. Hence
  the distortion of (\ref{eq:th1_1}) is an upper bound. One extreme case is that the
  quantizer is chosen to make sure no buffer
  overflow, i. e. , the quantization rate selection is to match the channel
  mutual information profile. This scheme will degenerate to no buffer
  (delay) case. Therefore, it is serious suboptimal. The optimal
  quantizer rate should balance the ``Jensen's gain'' and the
  reduced distortion by decreasing the buffer overflow probability
  via quantization rate matching the buffer status.
\end{itemize}
The introducing buffer delay in (\ref{eq:th1_4}) can be viewed as
first shrinks the integrand near to $1$ as $\theta \rightarrow 0$,
and then restore it after taking the expectation. From Fig. 3, we
can observe that after the contraction function of
$(\cdot)^{\theta}$, as $\theta$ goes to zero, the integrand function
become more linear. This observation can explain why we have a large
gain after introducing a buffer delay mathematically, and provide
some intuitions of distortion--delay function. Moreover, Fig. 3
shows that the large the bandwidth ration $\eta$, the more effective
of the shrink operation (larger gain). Therefore, introduce a buffer
delay has larger gain for high bandwidth ratio scenario, or high
resolution quantization. We will confirm the result later
theoretically by deriving the SNR exponent.
\begin{figure}
\begin{center}
\includegraphics[width=5.5in,height=3.5in]{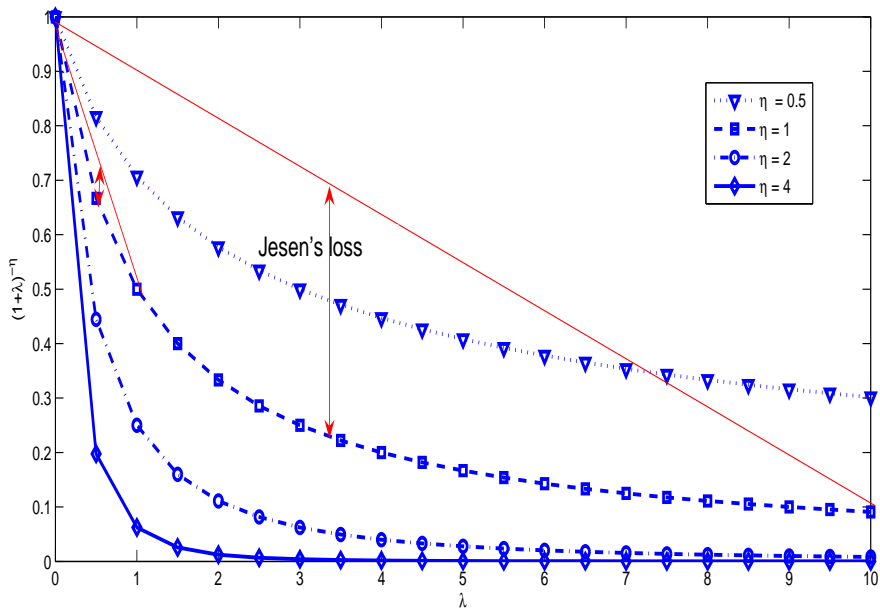}
\caption{Illustration of buffer delay effect on the distortion }
\label{Fig:3}
\end{center}
\end{figure}

The result of Theorem 1. is very complicate, not so much insight can
be given from the expression itself. In the ensuing part of this
paper, we will first investigate the distortion-delay of SISO, MISO
/ SIMO case, which a simpler form can be arrived. Then, for more
general MIMO channel, we consider the high SNR regime and compute
the distortion SNR exponent. Guassian approximation of MIMO mutual
information will also be used to derive an approximation for large
the antenna system.

\subsection{Single Antenna System (SISO)}
For simplicity, we introduce the normalized delay as
$\tau_n=\tau/T_f=\frac{1}{\theta B_w T_f} = \frac{1}{K \theta}$. For
the SISO Rayleigh fading channel, the channel matrix degenerate to a
scalar channel. We have following Corollary.
\begin{corollary}\label{cor:1}
For SISO system, the distortion-delay upper bound is
\begin{equation}\label{eq:cor1_1}
\mathcal{D}(\lambda\eta)\leq \bigg[\rho
^{-\lambda\eta}\exp\Big(\frac{1}{\rho}\Big)\gamma\Big(1-\lambda\eta,\frac{1}{\rho}\Big)\bigg]^{\frac{1}{\lambda}}~,
\end{equation}
where $\lambda = \frac{1}{\tau_n}$ and $\gamma(\cdot,\cdot)$ is the
incomplete Gamma function.
\end{corollary}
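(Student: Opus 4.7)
The plan is to specialize Theorem~\ref{th:th1} directly. With $M_t = M_r = 1$ we have $M_* = M^* = 1$ and $d = 0$, so $\mathbf{B} = \Gamma(1) = 1$ and the Hankel matrix $\mathbf{G}(\theta)$ collapses to the $1\times 1$ scalar
\begin{equation*}
g_{0,0} = \int_0^\infty (1+\rho t)^{-\theta K\eta}\, e^{-t}\,dt,
\end{equation*}
where I rename the integration variable (an eigenvalue of $\mathbf{HH}^H$, which is exponential with unit mean in the scalar Rayleigh case) to $t$ so as not to clash with the $\lambda := 1/\tau_n$ that appears in the corollary. Theorem~\ref{th:th1} then gives $\mathcal{D}(\theta) \leq g_{0,0}^{1/(K\theta)}$.

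Next I would convert the parameters. From $\tau_n = \tau/T_f$ together with $\tau = 1/(B_w\theta)$ and $K = B_w T_f$ one gets $\tau_n = 1/(K\theta)$, i.e.\ $K\theta = \lambda$; consequently the inner exponent is $-\lambda\eta$ and the outer exponent is $1/\lambda$. So the remaining task is to put
\begin{equation*}
h(\lambda,\rho) \;:=\; \int_0^\infty (1+\rho t)^{-\lambda\eta}\, e^{-t}\,dt
\end{equation*}
into closed form. For this I would use the single substitution $u = t + 1/\rho$, so that $1+\rho t = \rho u$, $e^{-t} = e^{1/\rho}e^{-u}$, and the lower limit moves from $0$ to $1/\rho$. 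Pulling $\rho^{-\lambda\eta}$ out of $(1+\rho t)^{-\lambda\eta}$ and the constant $e^{1/\rho}$ out of the exponential yields
\begin{equation*}
h(\lambda,\rho) \;=\; \rho^{-\lambda\eta}\, e^{1/\rho}\int_{1/\rho}^\infty u^{-\lambda\eta}\, e^{-u}\,du,
\end{equation*}
and the remaining integral is exactly the (upper) incomplete Gamma function at parameter $1-\lambda\eta$ evaluated at $1/\rho$, which the authors denote by $\gamma(1-\lambda\eta, 1/\rho)$. Raising both sides to the $1/\lambda$ power gives (\ref{eq:cor1_1}).

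The mathematical content is essentially a one-line change of variable, so there is no real analytic obstacle; the main risk is bookkeeping. In particular, $\lambda$ is overloaded (Wishart eigenvalue in Theorem~\ref{th:th1} versus $1/\tau_n$ in the corollary), and for short delays the Gamma parameter $1-\lambda\eta$ can be negative, so the integral representation only converges because of the cutoff $1/\rho > 0$ brought in by the substitution. As a sanity check I would verify the two limiting regimes: as $\tau_n \to \infty$ (so $\lambda \to 0$) one recovers $\mathcal{D}_\infty$ from (\ref{eq:2}), while as $\tau_n \to 0$ the bound degrades toward the no-delay distortion $\mathcal{D}_0$ in (\ref{eq:1}).
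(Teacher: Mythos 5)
Your proof is correct and follows essentially the same route as the paper: specialize Theorem~\ref{th:th1} to the scalar Rayleigh case (giving exactly the integral in (\ref{eq:cor1_2}), with $K\theta=\lambda$ from $\tau_n=1/(K\theta)$) and then put $\int_0^\infty(1+\rho t)^{-\lambda\eta}e^{-t}\,dt$ in closed form, where the paper simply cites a table entry of \cite{GR92} and you instead carry out the substitution $u=t+1/\rho$ explicitly. Your reading that the authors' $\gamma(1-\lambda\eta,1/\rho)$ must denote the \emph{upper} incomplete Gamma function $\int_{1/\rho}^\infty u^{-\lambda\eta}e^{-u}\,du$ is the right one---it is confirmed by the hypergeometric expansion in (\ref{eq:aeq:3.5}), and, as you note, the lower incomplete Gamma would not even converge when $\lambda\eta\geq 1$.
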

\begin{proof}
For SISO channel, the (\ref{eq:th1_1}) is reduced to the scaler
case,
\begin{align}\label{eq:cor1_2}
\mathcal{D}(\lambda)\leq \bigg[\int^\infty_0\Big(1+\rho
x\Big)^{-\lambda\eta}e^{-x}dx\bigg]^{\frac{l}{\lambda}}~,
\end{align}
by the formula of \cite{GR92}, we can complete the proof.
\end{proof}
The closed-form expressions of (\ref{eq:cor1_1}) is very difficult
to analyze due to the special functions. In order to analyze
distortion as the delay constraint increases, it is desirable to
reduce the function into some simple form that is easy to handle.
This objective motivates us to derive an asymptotically tight upper
bound for the distortion-delay function in next section.
\subsubsection{Asymptotic Analysis}
We start by characterizing the behavior of the tail of
distortion-delay curve $\mathcal{D}(\tau_n)$, hence we are
interested in the asymptotically large delay regime. We will only
consider Rayleigh fading SISO case. In this part, we assume $\eta =
1$ for simplicity, generalizing to other bandwidth ratio is
straightforward. We try to show that $\mathcal{D}(\tau_n)\rightarrow
\mathcal{D}(\infty)$ as $\tau_n\rightarrow\infty$. In addition, we
will prove that the limit is approached as $e^{\frac{C}{\tau_n}}$ by
finding the upper bound on the distortion-delay function and then
show the bound is asymptotically tight. The ergodic capacity of
$m_{th}$-order diversity Raleigh fading channel with a constant
transmission power can be expressed as \cite{CTB99}:
\begin{align}\label{eq:16}
C_{erg} &= \frac{\gamma(m,-m/\rho)}{\Gamma(m)}E_1(m/\rho)
+\sum^{m-1}_{k=1}\frac{1}{k}\frac{\gamma(k,m/\rho)\gamma(m-k,-m/\rho)}{\Gamma(k)\Gamma(m-k)}~,
\end{align}
where $\gamma(\cdot,\cdot)$ and $\Gamma(\cdot)$ denote incomplete
and complete Gamma functions; $E_1(\cdot)$ presents the exponential
integration function. Hence for $m=1$, the lower bound of
distortion/delay function can be written as:
\begin{equation}\label{eq:17}
\mathcal{D}(\infty) = \exp\Big(-
e^{\frac{1}{\rho}}E_1(1/\rho)\Big)~.
\end{equation}
Next, We try to derive the asymptotic upper bound on
$\mathcal{D}(\tau_n)$ of (\ref{eq:cor1_1}) to achieve the limit
$\mathcal{D}(\infty)$. We mean asymptotically in the sense of
$\tau_n\rightarrow\infty$ or $\lambda\rightarrow 0$.
\begin{theorem}\label{th:2}
An asymptotic upper bound for $\mathcal{D}(D_n)$ can be expressed
as:
\begin{equation}\label{eq:18}
\mathcal{D}_{upper}(\lambda) =\bigg[
\frac{1}{\lambda-1}\big(e^{\frac{1}{\rho}}-1\big)+
\frac{1}{1-\xi\lambda+\phi\lambda^2}{\rho}^{-\lambda}e^{\frac{1}{\rho}}\bigg]
^{\frac{1}{\lambda}}~,
\end{equation}
where $\xi=0.577215$ and $\phi = \frac{1}{12}(6\xi^2-\pi^2)$. As
$\lambda \rightarrow 0$ this upper bound is asymptotically tight and
approaches $\mathcal{D}(\infty)$ as $\mathcal{D}(\infty)\cdot e^{C
\lambda}$, where $C$ is some constant.
\end{theorem}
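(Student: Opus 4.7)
The plan is to dominate the bracketed quantity $I(\lambda):=\rho^{-\lambda}e^{1/\rho}\,\Gamma(1-\lambda,1/\rho)$ appearing inside the $(\cdot)^{1/\lambda}$ in Corollary~\ref{cor:1} (with $\eta=1$) by a sum of two explicit terms matching the two summands in (\ref{eq:18}); here $\Gamma(s,x)=\int_x^\infty t^{s-1}e^{-t}\,dt$ is the upper incomplete Gamma, which is what the integral in (\ref{eq:cor1_2}) evaluates to after the substitution $u=1+\rho x$. The key is to decompose $\Gamma(1-\lambda,1/\rho)=\Gamma(1-\lambda)-\gamma(1-\lambda,1/\rho)$: bound the complete Gamma piece from above, and bound the lower-incomplete Gamma piece (which enters with a minus sign) from below. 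Taking the $1/\lambda$-th power of the resulting bound on $I(\lambda)$ will deliver (\ref{eq:18}).

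For the complete Gamma, I use the Pad\'e-type bound $\Gamma(1-\lambda)\leq 1/(1-\xi\lambda+\phi\lambda^2)$. The choice $\phi=(6\xi^2-\pi^2)/12$ is forced by matching the Weierstrass expansion $\ln\Gamma(1-\lambda)=\xi\lambda+(\pi^2/12)\lambda^2+(\zeta(3)/3)\lambda^3+\cdots$ with $-\ln(1-\xi\lambda+\phi\lambda^2)$ through order $\lambda^2$; comparing the $\lambda^3$ coefficients $\zeta(3)/3$ versus $\xi^3/3-\xi\phi$ shows the Pad\'e side is strictly larger on a right-neighborhood of zero. For the lower incomplete Gamma, I substitute $v=u/\rho$ to get $\rho^{-\lambda}e^{1/\rho}\gamma(1-\lambda,1/\rho)=(e^{1/\rho}/\rho)\int_0^1 u^{-\lambda}e^{-u/\rho}\,du$; combined with the identity $\int_0^1 e^{-u/\rho}\,du=\rho(1-e^{-1/\rho})$, the desired bound $(e^{1/\rho}-1)/(1-\lambda)$ reduces to $\int_0^1\bigl[u^{-\lambda}-(1-\lambda)^{-1}\bigr]e^{-u/\rho}\,du\geq 0$. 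The bracket $u^{-\lambda}-(1-\lambda)^{-1}$ has mean zero on $[0,1]$, and both it and $e^{-u/\rho}$ are monotone decreasing in $u$, so Chebyshev's integral inequality for comonotone functions delivers the claim immediately. Summing the two estimates and using $-1/(1-\lambda)=1/(\lambda-1)$ reproduces the bracket of (\ref{eq:18}).

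For asymptotic tightness, both $I(\lambda)$ and the upper-bound bracket $I_{\text{upper}}(\lambda)$ equal $1$ at $\lambda=0$, so $\mathcal{D}(\lambda)\to e^{I'(0)}$ and $\mathcal{D}_{\text{upper}}(\lambda)\to e^{I'_{\text{upper}}(0)}$. Integration by parts on $\int_{1/\rho}^\infty\ln t\,e^{-t}\,dt$ yields $\partial_\lambda\Gamma(1-\lambda,1/\rho)\bigr|_{\lambda=0}=\ln\rho\cdot e^{-1/\rho}-E_1(1/\rho)$ and hence $I'(0)=-e^{1/\rho}E_1(1/\rho)=\ln\mathcal{D}(\infty)$ via (\ref{eq:17}). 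A second-order expansion $\ln I_{\text{upper}}(\lambda)/\lambda=I'_{\text{upper}}(0)+C\lambda+O(\lambda^2)$, with $C=\tfrac12(I''_{\text{upper}}(0)-I'_{\text{upper}}(0)^2)$ explicit in $\rho,\xi,\phi$, then gives $\mathcal{D}_{\text{upper}}(\lambda)=e^{I'_{\text{upper}}(0)}\cdot e^{C\lambda}(1+o(1))$. In the high-SNR regime the discrepancy $I'_{\text{upper}}(0)-I'(0)$ is of order $1/\rho^2$ and is absorbed into the leading factor, exhibiting the advertised form $\mathcal{D}(\infty)\cdot e^{C\lambda}$.

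The main obstacle is promoting the local (Taylor-level) match of the Pad\'e approximant into a genuine pointwise inequality $\Gamma(1-\lambda)\leq 1/(1-\xi\lambda+\phi\lambda^2)$ valid on a useful interval of $\lambda$: the $\lambda^3$ coefficient comparison handles a right-neighborhood of zero, but extending the bound up to the denominator's first positive root ($\lambda\approx 0.87$) requires controlling the signs of all higher-order Weierstrass coefficients $\zeta(k)/k$ against the corresponding coefficients of the Pad\'e expansion. The Chebyshev step, the derivative computation at $\lambda=0$, and the second-order expansion furnishing $C$ are conceptually routine once the decomposition is in place.
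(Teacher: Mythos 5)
Your proposal is correct and follows the paper's skeleton — the same decomposition $\Gamma(1-\lambda,1/\rho)=\Gamma(1-\lambda)-\gamma(1-\lambda,1/\rho)$ and the same Pad\'e-type bound $\Gamma(1-\lambda)\leq(1-\xi\lambda+\phi\lambda^2)^{-1}$ — but it diverges in two substantive places, both to your advantage. First, where you treat the lower-incomplete piece by the substitution $t=u/\rho$ and Chebyshev's integral inequality for two decreasing functions, the paper instead writes $\gamma(1-\lambda,1/\rho)$ through the confluent hypergeometric ${}_1F_1(1;2-\lambda;1/\rho)$ and lower-bounds it termwise via $(2-\lambda)_k\leq(2)_k$, landing on the same value $(e^{1/\rho}-1)/(1-\lambda)$; the two arguments carry identical content, yours slightly more elementary, the paper's a one-line series comparison. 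Second, and more importantly, your tightness analysis is genuinely stronger than the paper's. The paper expands the bracket as $1+a\lambda+b\lambda^2$, concludes $\mathcal{D}_{upper}\rightarrow e^{a}$, and then needs $a=-e^{1/\rho}E_1(1/\rho)$, which it reduces to the claim $1-e^{-1/\rho}-\xi+\ln\rho= E_1(1/\rho)$ — and verifies that claim only \emph{numerically}, via a plot in the appendix. In fact the identity is not exact: comparing the series coefficients $(-1)^{k+1}/(k\cdot k!)$ of $E_1(1/\rho)+\xi-\ln\rho$ against $(-1)^{k+1}/k!$ of $1-e^{-1/\rho}$ shows a discrepancy at order $1/\rho^{2}$, which is exactly your statement that $I_{upper}'(0)-I'(0)=O(\rho^{-2})$ is absorbed into the leading factor; your analytic quantification replaces the paper's numerics and makes the sense in which the bound converges to $\mathcal{D}(\infty)$ precise. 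Your integration-by-parts computation $I'(0)=-e^{1/\rho}E_1(1/\rho)=\ln\mathcal{D}(\infty)$ additionally establishes, exactly, that the true curve $\mathcal{D}(\lambda)$ tends to $\mathcal{D}(\infty)$, which the paper never pins down. Finally, the ``main obstacle'' you flag — extending $\Gamma(1-\lambda)\leq(1-\xi\lambda+\phi\lambda^2)^{-1}$ from a right-neighborhood of $0$ up to the denominator's root near $0.87$ — is not actually required: the theorem asserts only an asymptotic upper bound as $\lambda\rightarrow0$ (the paper itself writes $\dot{\leq}$ after discarding a $\delta\lambda^{3}$ term in the denominator), so your $\lambda^{3}$-coefficient comparison $\zeta(3)/3<\xi^{3}/3-\xi\phi$, which indeed holds numerically ($0.401<0.443$), already suffices.
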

\begin{proof}
See Appendix B.
\end{proof}

\subsubsection{Example 1.} We present some numerical results
to verify our findings. Suppose we have a real Gaussian source
$~N(0,1)$ with bandwidth $100kHz$, bandwidth ratio $\eta =
1$\footnote{A real Gaussian source is equivalent  to a complex one
with doubled bandwidth ratio}. We assume an i.i.d. block Rayleigh
fading channel model. Let the duration of each time frame be $2ms$
such that each data frame consists of $200$ source samples. Fig. 4
shows a normalized delay of $5T_f$ can achieve most of the gains,
especially for high transmission power. The gap between this curve
and the infinite delay case is less than $1$dB for typical SNR
value.
\begin{figure}
\begin{center}
\includegraphics[width=5.5in,height=3.5in]{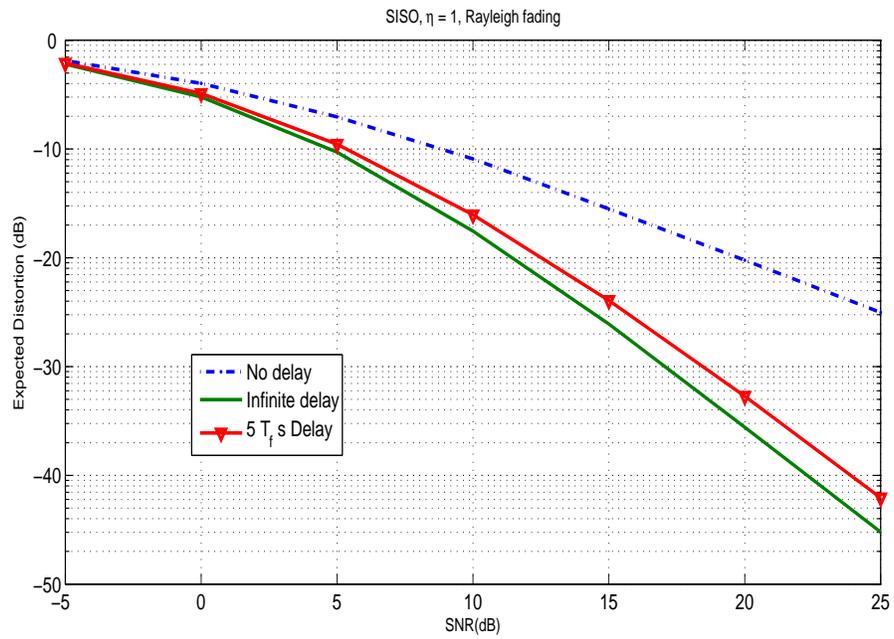}
\caption{Distortion of Real Gaussian Source Transmitted over i.i.d.
Rayleigh fading Channel. } \label{Fig:4}
\end{center}
\end{figure}
In Fig. 5, we plot the end-to-end quadratic distortion vs. SNR and
delay. It clearly characterizes the distortion and delay tradeoff
for the Gaussian source transmitted over the wireless fading
channel. Note that the higher the SNR value, the faster the
distortion converges to the infinite delay lower bound. For SNR
value of $25$dB, less than $2T_f$ delay can achieve most of the
Jensen's gain.
\begin{figure}
\begin{center}
\includegraphics[width=5.5in,height=3.5in]{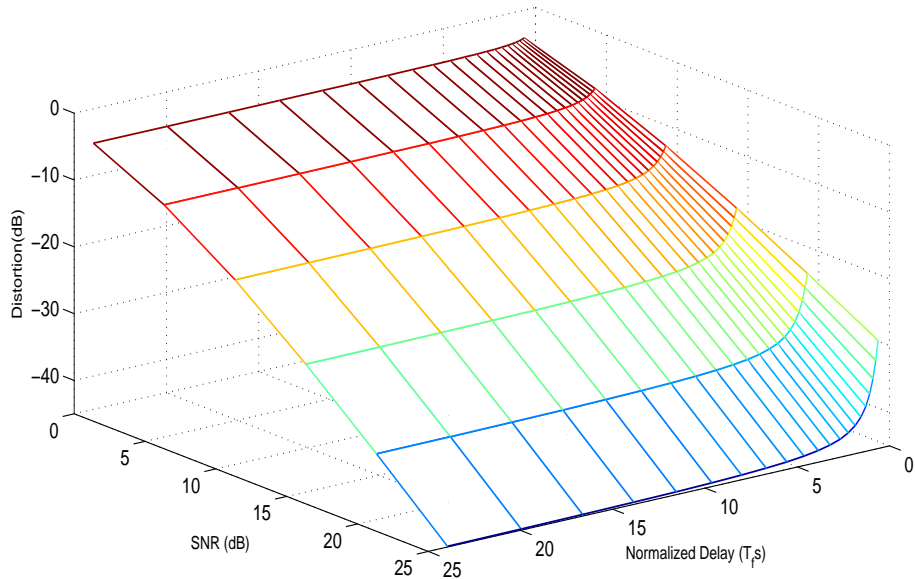}
\caption{Distortion vs. Delay and SNR } \label{Fig:5}
\end{center}
\end{figure}

Fig. 6 shows the upper bound for the distortion/delay
$\mathcal{D}(D_n)$ curve at SNR $=15$dB. The ergodic Shannon
capacity in this case is $3.0015$ nats/symbol and the distortion
$\mathcal{D}(\infty)$ is $0.0025$. The rate of distortion/delay
curve and the upper bound converge to the infinite delay lower bound
is clearly illustrated in Figure 5. It shows the upper bound is
asymptotically tight and converges. From this upper bound and the
distortion/delay function, we observe that introducing some finite
delay can help achieving the $\mathcal{D}(\infty)$ lower bound very
fast. In some practical applications, e.g., video transmission over
wireless fading channel, which can tolerate certain amount of delay,
our results suggest that inserting a buffer between quantizer and
transmitter will enhance the image quality significantly.
Intuitively, a transmission delay can be thought of as some delay
diversity corresponding to space diversity in MIMO channel. Hence
there is also some diversity-rate tradeoff for our problem, which
can lead to results similar to those in \cite{CN05}.
\begin{figure}
\begin{center}
\includegraphics[width=5.5in,height=3.5in]{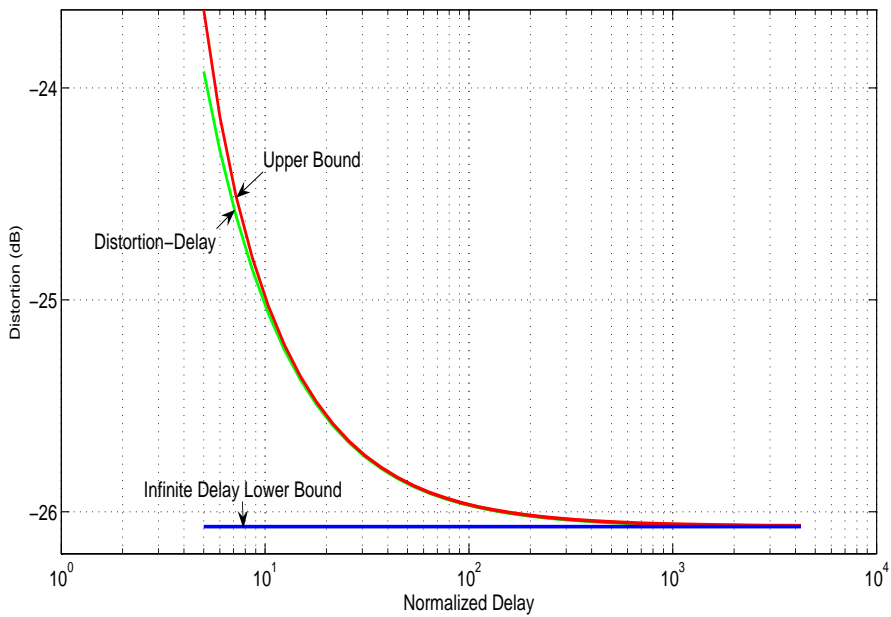}
\caption{Upper Bound of Distortion/Delay function (SNR=15dB) }
\label{Fig:6}
\end{center}
\end{figure}

\subsection{SIMO/MISO Antennas System}
For a SIMO channel of $m$ receiver antenna. We can consider such
channel as a $m_{th}$-order combining diversity Rayleigh fading
channel. Again we here assume $\eta = 1$ for simplicity. The channel
gain after combining is Chi-square distributed with $2m$ degrees of
freedom, and the probability density function (pdf) is given by:
\begin{equation}\label{eq:12}
f(x)= \frac{1}{(m-1)!}x^{m-1}e^{-x}, \quad x>0~.
\end{equation}
\begin{corollary}\label{th:1}
For the SIMO Rayleigh fading channel with m receive antennas. The
distortion-delay upper bound has a closed-form expression:
\begin{align}\label{eq:14}
\mathcal{D}_m(\tau_n)\leq \bigg [\frac{\Gamma(\lambda
-m)}{\Gamma(\lambda)}\rho^{-m}
 \,_1F_1\Big(m;m-\lambda+1;\frac{1}{\rho}\Big)
+\frac{\Gamma(m-\lambda)}{\Gamma(m)}\rho^{-\lambda} \,
_1F_1\Big(\lambda;\lambda-m+1;\frac{1}{\rho}\Big)\bigg]^{\tau_n} ~,
\end{align}
where $\lambda = 1/\tau_n $.
\end{corollary}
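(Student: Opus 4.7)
The plan is to specialize Theorem~\ref{th:th1} to a SIMO Rayleigh channel, recognize the resulting expectation as Tricomi's confluent hypergeometric function $U(a,b,z)$, and then apply Kummer's connection formula to split it into the two ${}_1F_1$ pieces appearing in (\ref{eq:14}).

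First I would set $M_t=1$, $M_r=m$ in Theorem~\ref{th:th1}. Then $M_*=1$, the Wishart density (\ref{eq:th1_5}) collapses to the chi-square pdf (\ref{eq:12}), and the MIMO bound (\ref{eq:th1_4}) reduces to a scalar integral. With $\eta=1$ and $\lambda=\theta K=1/\tau_n$ this gives
\begin{equation*}
\mathcal{D}(\tau_n) \le \Bigl\{\frac{1}{(m-1)!}\int_0^\infty (1+\rho x)^{-\lambda}\,x^{m-1}e^{-x}\,dx\Bigr\}^{\tau_n},
\end{equation*}
so the problem reduces to evaluating the inner integral in closed form.

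Next, I would rescale $u=\rho x$ to pull out a factor $\rho^{-m}$ and obtain
\begin{equation*}
\int_0^\infty u^{m-1}(1+u)^{-\lambda}e^{-u/\rho}\,du,
\end{equation*}
which is exactly the standard integral representation of $\Gamma(m)\,U(m,\,m-\lambda+1,\,1/\rho)$ (valid for all $\rho>0$ since $\mathrm{Re}(m)>0$ and $\mathrm{Re}(1/\rho)>0$). Hence the bracketed quantity equals $\rho^{-m}\,U(m,m-\lambda+1,1/\rho)$.

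Finally, I would apply Kummer's connection formula
\begin{equation*}
U(a,b,z)=\frac{\Gamma(1-b)}{\Gamma(a-b+1)}\,{}_1F_1(a;b;z)+\frac{\Gamma(b-1)}{\Gamma(a)}\,z^{1-b}\,{}_1F_1(a-b+1;2-b;z)
\end{equation*}
with $a=m$, $b=m-\lambda+1$, $z=1/\rho$. The parameter substitutions $1-b=\lambda-m$, $a-b+1=\lambda$, $b-1=m-\lambda$, $2-b=\lambda-m+1$, and $z^{1-b}=\rho^{m-\lambda}$ produce exactly the two terms on the right of (\ref{eq:14}), with prefactors $\rho^{-m}\Gamma(\lambda-m)/\Gamma(\lambda)$ and $\rho^{-\lambda}\Gamma(m-\lambda)/\Gamma(m)$. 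Raising to the power $\tau_n$ then yields the claimed bound. The main obstacle is not any deep analytic step but rather the parameter bookkeeping in Kummer's formula: the swap $(a,b)\mapsto(a-b+1,2-b)$ must be tracked so that the $\rho$-exponents come out as $-m$ and $-\lambda$, and care is needed because the two gamma prefactors have arguments of opposite sign. A secondary technicality is that when $\lambda-m$ is a non-positive integer the two terms individually develop poles (though their sum, the function $U$, is entire in $a,b$), so the formula should be read as an analytic continuation except on this measure-zero set of exceptional delays.
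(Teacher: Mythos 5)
Your proposal is correct and follows essentially the same route as the paper's proof: the paper also reduces the bound of Theorem~\ref{th:th1} to the scalar chi-square integral, identifies it with Tricomi's function via the table entry \cite[3.383.5]{GR92} (which is precisely the integral representation of $U(m,m-\lambda+1,1/\rho)$ that you derive by the substitution $u=\rho x$), and then applies the identical Kummer connection formula with the same parameter assignments $a=m$, $b=m-\lambda+1$, $z=1/\rho$. Your bookkeeping checks out (and in fact your statement of the connection formula corrects a typo in the paper's (\ref{aeq:3}), where $\Gamma(x-y+z)$ should read $\Gamma(x-y+1)$); your remark about the removable poles at integer $\lambda-m$ is a caveat the paper omits.
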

\begin{proof}
We start from Eqn. (\ref{eq:th1_1}), with SIMO case
\begin{align}\label{aeq:1}
\mathcal{D}(\theta)&=
\Big(\int^{\infty}_{0}(1+\rho x)^{-\lambda}f(x)d\rho\Big)^{\tau_n}\notag\\
 &= \bigg( \frac{1}{(m-1)!}
\cdot \int^{\infty}_{0}(1+\rho x)^{-\lambda}x^{m-1}e^{-x}dx
\bigg)^{\tau_n}~,
\end{align}
where we have used the expression of $f(x)$ in (\ref{eq:12}). We
know that \cite[Ch. 3.383.5]{GR92}:
\begin{equation}\label{aeq:2}
\int^{\infty}_{0}e^{-px}x^{q-1}(1+ax)^{-v}dx=a^{-q}\Gamma(q)\Psi\big(q,q+1-v;\frac{p}{a}\big)~,
\end{equation}
where $\Psi(\cdot,\cdot;\cdot)$ denotes the degenerate
Hypergeometric function. Reducing to the more commonly used
confluent hypergeometric function, we have following relation:
\begin{align}\label{aeq:3}
\Psi(x,y;z) = &\frac{\Gamma(1-y)}{\Gamma(x-y+z)} _1F_1(x;y;z) +
\frac{\Gamma(y-1)}{\Gamma(x)}z^{1-y} _1F_1(x-y+1;2-y;z)~.
\end{align}
Let $p = 1,\, q=m, \, v = \lambda$ and $a=\rho$. Plugging
(\ref{aeq:3}) into (\ref{aeq:2}), we can prove Lemma 1.

\end{proof}

For MISO case\footnote{We assume transmitter has CSI for MISO case
for beamforming transmission}, it is similar to the SIMO case but
dividing the power by $m$. Even for the SIMO/MISO case the
distortion-delay upper bound function is very complicate. We can
only get some numerical results. Therefore, for more general MIMO
channel, we resort to the SNR exponent in high SNR regime to
demonstrate the buffer gain.

\section{Distortion Exponent of MIMO Block fading channel}
For MIMO block fading channel with a buffered transmission, Eqn.
(\ref{eq:th1_1}) is very hard to analyze and provides less insight.
We can only use the numerical method to compute the function. Since
the ``Jensen's gain'' is negligible in low SNR regime and become
significant at high SNR. Therefore we are more interested in the
high SNR behavior of the expected distortion. We defined the figure
of merit of \emph{distortion exponent} \cite{CN05} with bandwidth
ratio $\eta$:
\begin{equation}\label{eq:exp_1}
\alpha(\eta) = -\lim _{\rho\rightarrow
\inf}\frac{\log\mathcal{D}(\rho,\eta)}{\log\rho}~.
\end{equation}
A distortion exponent of $\alpha$ means that the expected distortion
decays as $\rho^{-\alpha}$ with increasing SNR value $\rho$ when the
SNR is high. We want to characterize the buffer delay and bandwidth
ratio's impact on the SNR exponent.

\begin{theorem}\cite{CN05} (No Buffer)
For transmission of memoryless, complex Gaussian source over a MIMO
block fading channel, the distortion exponent with perfect known
channel is given by
\begin{equation}\label{eq:exp_2}
\alpha(\eta) = \sum_{i=1}^{M_*}\min\Big(\eta,2i-1+|M_t-M_r|\Big).
\end{equation}
\end{theorem}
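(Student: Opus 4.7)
The plan is to evaluate the high-SNR exponent of the no-buffer distortion
$$\mathcal{D}_0(\rho,\eta) \doteq \textrm{E}\!\left[\prod_{i=1}^{M_*}\Big(1+\tfrac{\rho}{M_t}\lambda_i\Big)^{-\eta}\right],$$
where the $\lambda_i$ are the eigenvalues of $\mathbf{HH}^H$, by the Zheng-Tse change-of-variables technique combined with Laplace's method, a route that reduces the exponent computation to a finite-dimensional linear program on the eigenvalue levels.

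First I would justify the above expression: with instantaneous CSI and no buffer, the transmitter selects a Gaussian codebook and quantization rate matched to the realized mutual information $I(\mathbf{H})=\log\det(\mathbf{I}+\tfrac{\rho}{M_t}\mathbf{HH}^H)$, so the distortion-rate relation $\mathcal{D}(R)=e^{-\eta R}$ (applied per realization and then averaged) gives exactly the product form above. This step makes the coupling to channel diversity explicit through the eigenvalues.

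Next, following Zheng-Tse, I would substitute $\lambda_i = \rho^{-\alpha_i}$ with the ordering $\alpha_1 \ge \alpha_2 \ge \cdots \ge \alpha_{M_*}$. The Wishart joint density in (\ref{eq:th1_5}) has the high-SNR asymptotic form
$$f(\boldsymbol{\alpha}) \doteq \rho^{-\sum_{i=1}^{M_*}(2i-1+d)\alpha_i}, \qquad d=|M_t-M_r|,$$
valid on the region $\alpha_i \ge 0$ (negative $\alpha_i$ contribute vanishingly in the integral due to the Vandermonde factor and the exponential term in $f$). The term $(1+\tfrac{\rho}{M_t}\lambda_i)^{-\eta}$ satisfies $(1+\rho^{1-\alpha_i})^{-\eta} \doteq \rho^{-\eta(1-\alpha_i)^+}$, so after collecting exponents Laplace's method yields
$$\mathcal{D}_0(\rho,\eta) \doteq \rho^{-\alpha(\eta)}, \quad \alpha(\eta)=\inf_{\boldsymbol\alpha\ge 0,\;\alpha_1\ge\cdots\ge\alpha_{M_*}} \sum_{i=1}^{M_*}\Bigl[(2i-1+d)\alpha_i+\eta(1-\alpha_i)^+\Bigr].$$

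Finally I would solve this linear program. Since the objective separates across $i$ and is piecewise linear in each $\alpha_i$, with optimum attained at an endpoint of $[0,1]$, each $\alpha_i$ contributes $\min(\eta,\,2i-1+d)$. The main technical point is verifying that the ordering constraint $\alpha_1\ge\cdots\ge\alpha_{M_*}$ is not active: because the coefficients $2i-1+d$ are strictly increasing in $i$, the unconstrained per-coordinate minimizer assigns $\alpha_i=1$ (contributing $2i-1+d$) to small indices and $\alpha_i=0$ (contributing $\eta$) to large indices, which is automatically consistent with $\alpha_1\ge\cdots\ge\alpha_{M_*}$. Summing gives $\alpha(\eta)=\sum_{i=1}^{M_*}\min(\eta,\,2i-1+|M_t-M_r|)$, as claimed. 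The main obstacle I would anticipate is the achievability/converse cleanliness of the Laplace step — i.e., rigorously matching the dominant exponential order in a multi-dimensional integral with the Vandermonde factor and ensuring the infimum is attained at the extreme points — essentially reproducing the outage analysis of Zheng-Tse in this distortion context.
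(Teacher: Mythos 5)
Your proposal is correct and takes essentially the same approach as the paper: the paper attributes this theorem to \cite{CN05} and notes it is proved ``using the technique of \cite{ZT03}'', which is exactly your route --- the change of variables $\lambda_i=\rho^{-\alpha_i}$, the Wishart high-SNR exponent $\rho^{-\sum_i(2i-1+d)\alpha_i}$, the replacement $(1+\rho^{1-\alpha_i})^{-\eta}\doteq\rho^{-\eta(1-\alpha_i)^+}$, and the resulting per-coordinate linear program solved at the endpoints $\alpha_i\in\{0,1\}$. Indeed, the paper's own Appendix~B proof of the buffered Theorem~4 carries out this identical computation with $\theta K\eta$ in place of $\eta$, and your handling of the ordering constraint (levels $\alpha_i=1$ assigned to small indices, consistent with $\alpha_1\geq\cdots\geq\alpha_{M_*}$) matches that argument.
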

The proof of Theorem 3, using the technique of \cite{ZT03}.
Intuitively, when the bandwidth ratio is low, the distortion is
limited by the $\eta$ and the degree of freedom of MIMO channel --
the total degree freedom utilized to transmit the information. One
the other hand, when the bandwidth ratio is high, we need more
diversity to provide the transmission reliability. Hence, for high
bandwidth ratio, the system is diversity limited and the SNR
exponent is determined by the second term.

\begin{theorem}(with buffer delay)
For transmission of memoryless, complex Gaussian source over a MIMO
block fading channel, If the quantized bits are stored in a buffer
before transmitting over the fading channel. Assume the transmitter
know exactly the instantaneous channel capacity, the distortion SNR
exponent is given by
\begin{equation}\label{eq:exp_3}
\alpha(\eta) =\tau_n \min
\Big\{\frac{\eta}{\tau_n},2i-1+|M_r-M_t|\Big\}~.
\end{equation}
\end{theorem}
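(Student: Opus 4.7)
The plan is to follow the Zheng--Tse style high-SNR asymptotic analysis applied to the Theorem~1 bound, tracking how the buffer parameter $\tau_n=1/(K\theta)$ enters. Substituting $\theta K = 1/\tau_n$ into equation (\ref{eq:th1_4}), the upper bound can be rewritten in the compact form
\begin{equation*}
\mathcal{D}(\theta) \le \left\{\int_{\mathbb{R}_+^{M_*}}\prod_{i=1}^{M_*}\Big(1+\frac{\rho}{M_t}\lambda_i\Big)^{-\eta/\tau_n} f(\boldsymbol{\lambda})\,d\boldsymbol{\lambda}\right\}^{\tau_n},
\end{equation*}
so that the problem reduces to computing the SNR exponent of the inner integral and then multiplying by $\tau_n$.

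Next, I would perform the change of variables $\lambda_i = \rho^{-\alpha_i}$, which is the standard Zheng--Tse substitution. Under this substitution, the Wishart joint density $f(\boldsymbol{\lambda})$ in (\ref{eq:th1_5}), together with the Jacobian, scales in the high-SNR regime as $\rho^{-\sum_i(2i-1+d)\alpha_i}$ on the region $\alpha_1\ge\cdots\ge\alpha_{M_*}\ge 0$ (eigenvalues larger than one contribute negligibly at high SNR, so it suffices to restrict to $\alpha_i\ge 0$). The integrand factor transforms as
\begin{equation*}
\Big(1+\frac{\rho}{M_t}\lambda_i\Big)^{-\eta/\tau_n}\doteq \rho^{-(\eta/\tau_n)(1-\alpha_i)^+},
\end{equation*}
where $\doteq$ denotes exponential equality. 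Therefore the inner integral is of Laplace type and, to exponential order, equals $\rho^{-\alpha^\star}$ with
\begin{equation*}
\alpha^\star = \inf_{\boldsymbol{\alpha}\ge 0}\ \sum_{i=1}^{M_*}\Big[(2i-1+d)\alpha_i + \tfrac{\eta}{\tau_n}(1-\alpha_i)^+\Big].
\end{equation*}

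The optimization decouples across $i$; for each $i$ the objective is piecewise linear in $\alpha_i$, minimized either at $\alpha_i=0$ (giving $\eta/\tau_n$) or at $\alpha_i=1$ (giving $2i-1+d$), depending on which slope is smaller. This immediately yields the per-component minimum $\min\{\eta/\tau_n,\,2i-1+d\}$. Raising to the power $\tau_n$ converts the exponent of the bound into $\tau_n\cdot\alpha^\star$, which is the claimed expression (with the sum over $i=1,\ldots,M_*$). As sanity checks, setting $\tau_n\to 0$ (zero delay) reduces the bound to the no-buffer exponent of Theorem~3 up to the relabeling $\eta\to\eta/\tau_n$, while letting $\tau_n\to\infty$ recovers the expected Jensen behavior.

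The main obstacle is the Laplace-method justification: I need to argue that the integral is controlled by the region $\boldsymbol{\alpha}\ge 0$ (contributions from unordered or negative $\alpha_i$ are handled as in \cite{ZT03}), that the Vandermonde factor $\prod_{i<j}(\lambda_i-\lambda_j)^2$ does not alter the exponential order once we restrict to strictly ordered $\boldsymbol{\alpha}$, and that the polynomial prefactors coming from $K_{M_t,M_r}^{-1}$ and the Jacobian are subexponential in $\rho$ so they do not affect $\alpha^\star$. These are routine extensions of the Zheng--Tse machinery but must be spelled out carefully since the exponent $\eta/\tau_n$ can be arbitrarily large, which is precisely the regime where the $(1-\alpha_i)^+$ term dominates and tightness of the Laplace estimate needs to be verified by matching lower bounds.
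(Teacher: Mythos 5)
Your proposal is correct and follows essentially the same route as the paper's Appendix~B proof: the Zheng--Tse change of variables $\lambda_i=\rho^{-\alpha_i}$, the exponential-order reduction $\bigl(1+\tfrac{\rho}{M_t}\lambda_i\bigr)^{-\theta K\eta}\doteq\rho^{-\theta K\eta(1-\alpha_i)^+}$, the decoupled per-coordinate minimization at $\alpha_i\in\{0,1\}$, and finally raising to the power $\tau_n$. If anything, you state the bookkeeping more carefully than the paper does --- you correctly use $\theta K=1/\tau_n$ (the appendix misstates this as $\theta K=\tau_n$) and you retain the sum over $i=1,\ldots,M_*$ that is dropped in the theorem's displayed formula.
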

\begin{proof}
Proof can be found in Appendix II.
\end{proof}
\noindent \textbf{Remarks}
\begin{itemize}
  \item We found the SNR exponent of Theorem 4 is similar as the one
  of joint encoding and decoding of $L$ MIMO fading blocks. However,
  the joint encoding increase the transmitter and receiver
  complexity. Introduce a simple buffer delay can get the same SNR
  exponent by utilizing the time diversity.
  \item For SIMO/MIMO case, the SNR exponent reduces to
  $\min\{\eta,\tau_n M\}$, where $M$ is the receiver / transmitter
  antenna number. We can consider $\eta = \tau_n M$ as a corner point.
  Below this point, the system is degree of freedom limited, hence
  introduce more antenna will not improve the SNR exponent. Beyond
  this point, the system is diversity limited. Increasing the
  antenna number to provide more combining branches  that will increase diversity, hence SNR exponent is
  also increased.
\end{itemize}
\begin{figure}
\begin{center}
\includegraphics[width=5.5in,height=3.5in]{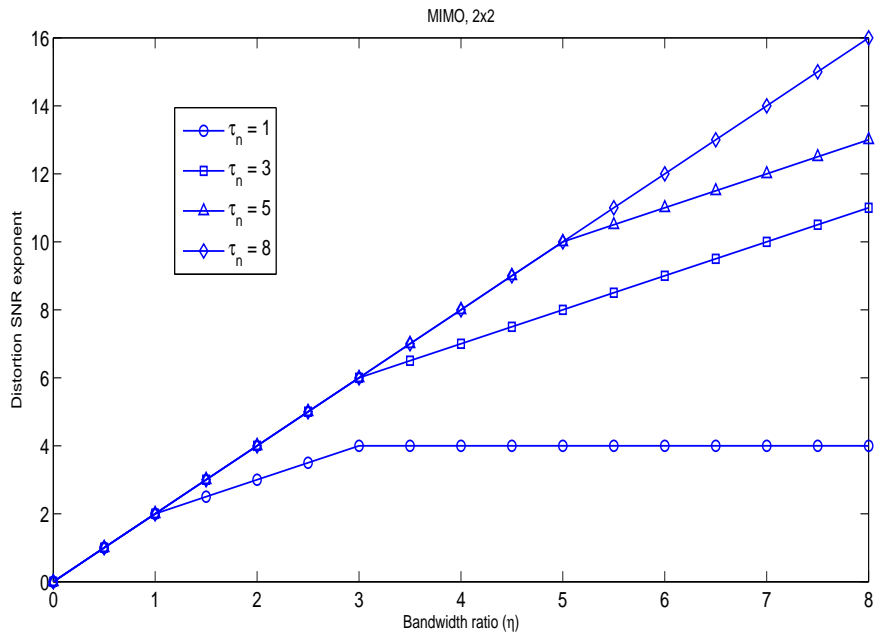}
\caption{Distortion exponent v.s. bandwidth ratio for block fading
2x2 MIMO channel. } \label{Fig:7}
\end{center}
\end{figure}

\begin{figure}
\begin{center}
\includegraphics[width=5.5in,height=3.5in]{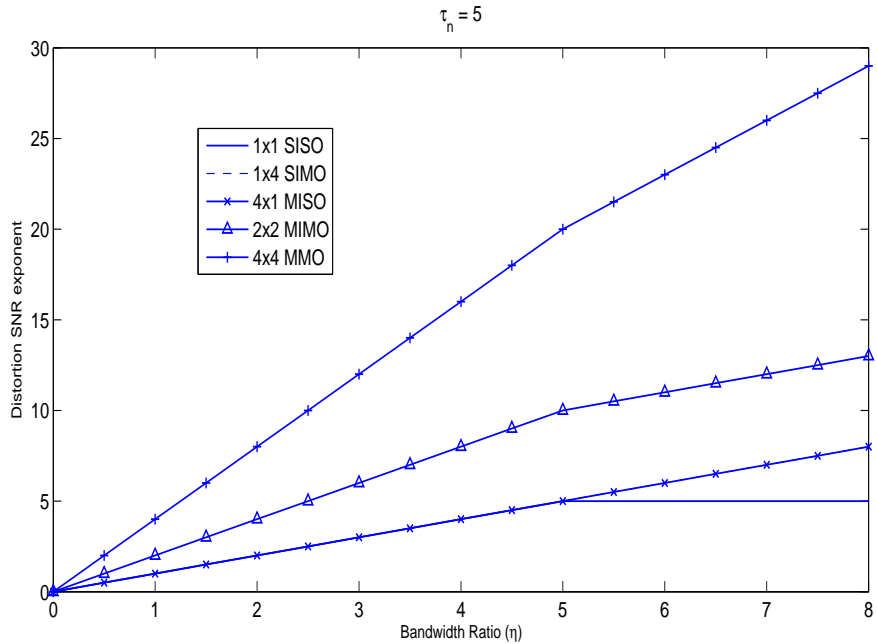}
\caption{Distortion exponent v.s. bandwidth ratio for normalized
delay = 5.} \label{Fig:8}
\end{center}
\end{figure}

In Fig. 7, we fixed the MIMO channel as $2\times2$, and plotted the
SNR exponent v.s. the bandwidth ratio curves for different delays.
As the delay increases, we have more time diversity to combat
fading, hence the corner point of  the exponent-bandwidth ratio
curve also increases. For $\tau_n = 1$, the maximum SNR exponent can
be achieved for $\eta = 3$. It is useless to increase channel
bandwidth ratio beyond 3 in the high SNR. In Fig. 8, We fixed the
normalized delay as $\tau_n = 5$ and show different SNR
exponent-bandwidth ratio curves for different antenna settings. For
SISO channel, the SNR exponent will not increase anymore as the
bandwidth ratio increase beyond $5$.

\subsection{MIMO Mutual Information Gaussian Approximation}
Due to inamenable to handle of Eqn. (\ref{eq:th1_1}), we can use
some approximations of the MIMO mutual information. The mathematical
operation of $\log\det(\cdot)$ involves an extensive amount of
average. Therefore the Lyapunov's central limit theorem can be
applied. The mutual information can be approximate as a Gaussian
distribution for large antenna systems. In \cite{HMT04}, the mean
and variance of different antenna settings has been derived. We will
use the results of \cite{HMT04} to derive the distortion-delay
approximations for different antenna settings.
\subsubsection{Large $M_r$, fixed $M_t$}
For this case the mutual information obeys
\begin{equation}\label{eq:Gau_1}
\mathcal{I}\sim \mathcal{N}\bigg( M_t \ln\Big(1+\frac{M_r\rho}{M_t}
\Big),\frac{M_t}{M_r}\bigg)~.
\end{equation}
The well-known moment generate function of the Gaussian distribution
is $\textrm{E}(e^{sx})=\exp(sm_x+\frac{1}{2}s^2\sigma_x^2)$, where
$m_x$ and $\sigma_x^2$ is the mean and variance of the Gaussian
variable $x$. Plug (\ref{eq:Gau_1}) into (\ref{eq:eff_cap}) and
after some straightforward math manipulations, we can get the
effective capacity and distortion delay function as
\begin{align}\label{eq:Gau_2.1}
E_c(\theta) &=
M_t\eta\ln\Big(1+\frac{M_r}{M_t}\rho\Big)-\frac{1}{2}\theta
K\frac{M_t}{M_r}\eta^2
\end{align}
\begin{align}\label{eq:Gau_2.2}
 \mathcal{D}(\tau_n) &\leq
\bigg[1+\frac{M_r\rho}{M_t}-\exp\Big(\frac{M_t}{2M_r}(\frac{\eta^2}{\tau_n})\Big)\bigg]^{-M_t\eta}
\end{align}
From Eqn. (\ref{eq:Gau_2.1}, \ref{eq:Gau_2.2}), the effective
capacity approaches to the ergodic capacity as $\theta \rightarrow
0$ or $M_r\rightarrow \infty$ (channel hardening). The SNR exponent
is $M_t\eta$, which is the same as Theorem 4, as $M_t$ fixed and
$M_r$ goes to infinity. Hence the SNR exponent is determined by the
first term in Eqn. (\ref{eq:exp_3}). We found the Guassian
approximation did reveal the distortion-delay tradeoff
asymptotically.
\subsubsection{Large $M_t$, fixed $M_r$}
the mutual information obeys
\begin{equation}\label{eq:Gau_3}
\mathcal{I}\sim \mathcal{N}\bigg( M_r \ln\Big(1+\rho\
\Big),\frac{M_r\rho^2}{M_t(1+\rho)^2}\bigg)~.
\end{equation}
The effective capacity and distortion delay curve is
\begin{align}\label{eq:Gau_4}
E_c(\theta) &= M_r\eta\ln(1+\rho)-\frac{1}{2}\theta
K\eta^2\frac{M_t}{M_r}\frac{\rho^2}{1+\rho^2}\\
\mathcal{D}(\tau_n) &\leq \bigg[1+\rho
-\exp\Big(\frac{M_r}{2M_t}(\frac{\eta^2}{\tau_n})\frac{\rho^2}{1+\rho^2}\Big)\bigg]^{-M_r\eta}
\end{align}
Again, the effective capacity approaches to the ergodic capacity as
$\theta \rightarrow 0$ or $M_t\rightarrow \infty$  The SNR exponent
is $M_r\eta$, which confirmed the results of Theorem 4.

\subsubsection{Large $M_t$ and $M_r$, Fixed $\beta = M_r/M_t$, High
SNR}
 The mutual information obeys
\begin{align}\label{eq:Gau_5}
\mathcal{I}&\sim \mathcal{N}\bigg( M_t
\mu(\beta,\rho),\sigma^2(\beta)\bigg)~, \quad \beta\geq 1 \\
&\sim \mathcal{N}\bigg( M_r \mu\Big(\frac{1}{\beta},\beta
\rho\Big),\sigma^2\Big(\frac{1}{\beta}\Big)\bigg)~, \quad \beta\leq~
1~.
\end{align}
Where $\mu(\beta,\rho) = \ln\rho+F(\beta)$,
$F(\beta),\sigma^2(\beta)$ are functions only depends on $\beta$.
The effective capacity capacity and distortion-delay function is:
\begin{align}\label{eq:Gau_4}
E_c(\theta) &= M_r\eta\ln(\rho)-\theta C_1 \\
\mathcal{D}(\tau_n)&\leq
\bigg[\rho -C_2\bigg]^{-M_r\eta}~, \quad \beta\geq 1 \\
E_c(\theta) &= M_t\eta\ln(\rho)-\theta C_3 \\
\mathcal{D}(\tau_n) &\leq \bigg[\rho -C_4\bigg]^{-M_t\eta}~, \quad
\beta\geq 1~,
\end{align}
Where $C_1,C_2,C_3,C_4$ are some constants. As both $M_r,M_t$ goes
to big and with fixed $\beta$, hence the $|M_t-M_r|$ also goes
large, the SNR exponent is still $M_*\eta$.
\section{Discussion and Remarks}
In previous sections, we have clearly characterized the
distortion/delay curve. However, we depend on some ideal
assumptions, e.g., the instantaneous channel capacity is achievable
and the CSI is perfectly known at the transmitter.
\begin{notation}\label{rm:1}(\emph{Decoding Error Probability})
In previous discussion we have assume using the the Gaussian code to
achieve the instantaneous capacity. In reality, we have to take the
decoding error probability into account for short codewords.
\cite{NG04} has integrated the physical layer decoding error into
the effective capacity function through random coding error
exponent. They have shown a joint queuing/coding exponent exits.
Such an exponent can fit well into our distortion and delay
analytical frame work.
\end{notation}
\begin{notation}\label{rm:2}(\emph{Power Control})
Since we have perfect CSI at the transmitter, given an average
transmission power budget, we can control the transmission power to
maximize the effective capacity or minimize the end-to-end
distortion for some delay constraint. In other words, the
transmission power is not necessarily constant. Recent work
\cite{JZ05} shows that, the optimum power adaptation policy is
related to the delay constraint. As the delay goes to infinity, the
power control policy approaches water-filling solution. On the
contrary, for stringent delay constraints, the optimum power control
policy becomes more like ``truncated channel inversion''. In the
future work, we will investigate the how optimum power control
affects the distortion/delay curves.
\end{notation}
\begin{notation}\label{rm:3}(\emph{Channel Correlation})
Although i.i.d. block fading channel is easy to analyze and has
several practical applications, this model is not always valid. It
is more general and practical to consider channel correlation. We
can use Jake's model to characterize the correlated channel fading
process. The autocorrelation of channel gain $R(\tau)$ can be
expressed as
\begin{equation}\label{eq:18}
R(\tau)=J_0(2\pi f_d \tau)~,
\end{equation}
where $J_0(\cdot)$ denotes the zero-th order Bessel function of
first kind and $f_d$ represents the maximum Doppler frequency.
Channel correlation will reduce the effective capacity\cite{WN03}.
Intuitively, correlation may cause the fading channel to stay in the
bad status for a longer time compared with i.i.d. block fading.
\cite{JZ05} shows that given a correlated fading channel with the
same marginal statistics as i.i.d. case, the effective capacity of
such a correlated channel is a linear shift in delay axis in
logarithmic scale, the shift value is proportional to the Doppler
frequency $f_d$. Hence the i.i.d. block fading distortion/delay
tradeoff can be easily extended to the correlated case.
\end{notation}

\section{Conclusion}
We investigate the fundamental problem of distortion/delay tradeoff
for the analogue source transmitted over wireless fading channels.
We derive a close-form analytical formula to characterize this
relationship using recently proposed effective capacity. Based on
this closed-form expression, we give out an upper bound that is
asymptotically tight to study the convergence behavior of the
distortion/delay function for SISO channel. We also characterized
the SNR exponent of MIMO block fading channel in the high SNR
regime. Simulation results show that a small delay can result in a
significant transmission power save. The framework of this paper is
applicable to a broad class application, e.g., video transmission.

\renewcommand{\thesection}{Apendix \Alph{section}}
\renewcommand{\theequation}{A-\arabic{equation}}
\setcounter{equation}{0}  
\setcounter{section}{0}

\section{Proof of Theorem 2}
\begin{proof}
From Eqn. (\ref{eq:cor1_1}) of Corollary, we have
\begin{align}\label{eq:aeq:3.5}
\mathcal{D}(\lambda)&\leq \bigg[\rho
^{-\lambda}\exp\Big(\frac{1}{\rho}\Big)\gamma\Big(1-\lambda,\frac{1}{\rho}\Big)\bigg]^{\frac{1}{\lambda}}\notag\\
&=\bigg[\frac{1}{\lambda-1}\frac{1}{\rho}
 \,_1F_1\Big(1;2-\lambda;\frac{1}{\rho}\Big)
+\Gamma(1-\lambda)\Big(\frac{1}{\rho}\Big)^{\lambda}
\exp\Big(\frac{1}{\rho}\Big)\bigg] ^{\frac{1}{\lambda}}
\end{align}
Since $\frac{1}{\lambda-1}<0$ as $\lambda \rightarrow 0$, we first
lower-bound the confluent hypergeometric function.
\begin{align}\label{aeq:4}
_1F_1(1;2-\lambda;x)
&=\sum_{k=0}^{\infty}\frac{(1)_k}{(2-\lambda)_k}
\frac{x^k}{k!}\notag\\ &\geq
\sum_{k=0}^{\infty}\frac{(1)_k}{(2)_k}\frac{x^k}{k!}=\frac{1}{x}(e^x-1)~,
\end{align}
where $(a)_k\triangleq a\cdot(a+1)\cdots(a+k-1)$. For
$\lambda\rightarrow0$ this lower bound is asymptotically tight. Next
we upper-bound the $\Gamma(1-\lambda)$.
\begin{align}\label{aeq:5}
\Gamma(1-\lambda)&=-\lambda\cdot\Gamma(-\lambda)
=\frac{-\lambda}{\frac{1}{\Gamma(-\lambda)}}\notag\\
&=\frac{-\lambda}{-\lambda+\xi(-\lambda)^2+\phi(-\lambda)^3+\delta(-\lambda)^4
+O((-\lambda)^5)} \notag \\
&\leq \frac{1}{1-\xi\lambda+\phi\lambda^2-\delta\lambda^3}~,
\end{align}
where $\xi=0.577215$ , $\phi = \frac{1}{12}(6\xi^2-\pi^2)$ and
$\delta$ is some constant. Hence replacing (\ref{aeq:4}) and
(\ref{aeq:5}) in (\ref{eq:15}) we have the following upper bound
\begin{equation}\label{aeq:6}
\mathcal{D}(\lambda) \dot{\leq}\bigg[
\frac{1}{\lambda-1}\big(e^{\frac{1}{\rho}}-1\big)+
\frac{1}{1-\xi\lambda+\phi\lambda^2}\rho^{-\lambda}e^{\frac{1}{\rho}}\bigg]
^{\frac{1}{\lambda}}~,
\end{equation}
where we have omitted $O(\lambda^3)$ term, which will not affect the
result as $\lambda\rightarrow 0$. Using Taylor expansion for the
first term and second term, and dropping the $O(\lambda^3)$, we
obtain the following asymptotic approximation,
\begin{align}\label{aeq:7}
\mathcal{D}_{upper}(\lambda)&\dot{\approx}
[1+a\lambda+b\lambda^2]^{\frac{1}{\lambda }}\notag\\
&= \exp(a)\exp\Big((b-\frac{a^2}{2})\lambda\Big)~,
\end{align}
where we have used the identity
$\lim_{x\rightarrow0}(1+x)^{\frac{1}{x}} = e$, and
\begin{align} \notag
a&\triangleq 1-e^{\frac{1}{\rho}}+\xi
e^{\frac{1}{\rho}}-\ln \rho e^{\frac{1}{\rho}}\\
b&\triangleq 1-e^{\frac{1}{\rho}}+(\xi^2-\phi)e^{\frac{1}{\rho}}-\xi
\ln\rho e^{\frac{1}{\rho}}+\ln^2\rho \notag~.
\end{align}
In order to show $\mathcal{D}_{upper}(\lambda)\rightarrow
\mathcal{D}(\infty)$ in (\ref{eq:17}), in other word
$(\ref{aeq:7})\rightarrow (\ref{eq:17})$, we want to show that
\begin{equation}\label{aeq:8}
F\triangleq1-e^{-\frac{1}{\rho}}-\xi + \ln\rho\rightarrow
E_1(1/\rho)~.
\end{equation}
$E_1(\cdot)$ is a special function, and don't have simple
expression. Instead we use numerical method to illustrate the
convergence. We have plotted these two values in Figure 6. We can
observe for most SNR these two values match perfectly. Hence we
conclude that the upper bound converges and the convergent rate is
exponential.
\begin{figure}
\begin{center}
\includegraphics[width=5in,height=3.5in]{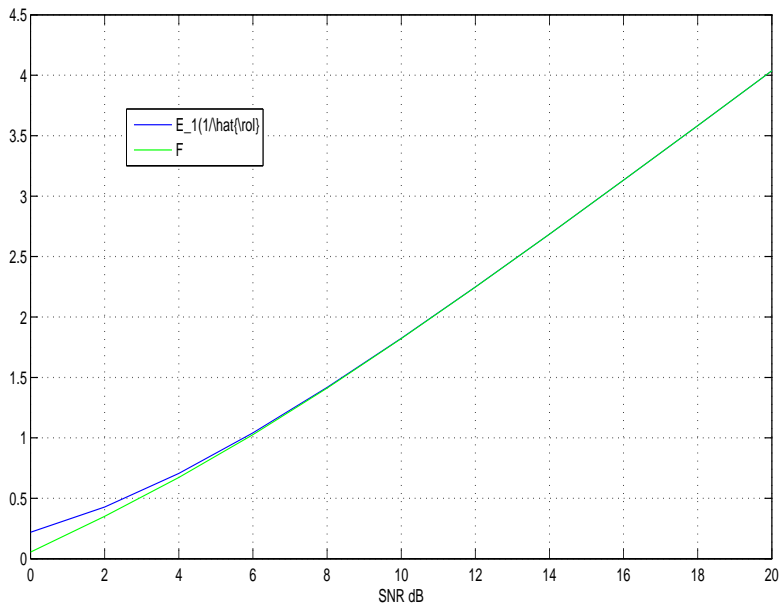}
\caption{Illustration (A-8) for different SNR values } \label{Fig:2}
\end{center}
\end{figure}
\end{proof}

\section{Proof of Theorem 4}
\begin{proof}
We will follow the technique used in \cite{ZT03}. Assume without
loss of generality that $M_t = M_* \leq M_r$ (the case  $M_t > M-r$
is a simple extension). We start from the distortion delay function
(\ref{eq:th1_4})
\begin{align}\label{aeq:9}
\mathcal{D}(\rho)=\bigg\{\int^{\infty}_{0}\prod
\Big(1+\frac{\rho}{M_t}\lambda_i\Big)^{-\theta K
\eta}f(\boldsymbol{\lambda})d\boldsymbol{\lambda}\bigg\}^{\frac{1}{\theta
K}}~,
\end{align}
where $\lambda_1 \leq \lambda_2 \leq \cdots \leq \lambda_{M_t}$ are
the ordered eigenvalues of $\mathbf{HH}^H$. We make the change of
variable: $\alpha_i= -\log(\lambda_i)/\log(\rho)$, for all $i =
1,\cdots,M_t$, The joint pdf $\boldsymbol{\alpha}=
[\alpha_1,\cdots,\alpha_{M_t}]$, where $\alpha_1\geq \cdots
\geq\alpha_{M_t}$, is given by
\begin{align}\label{aeq:10}
f(\boldsymbol{\alpha})=
K^{-1}_{M_t,M_r}\big(\log\rho\big)^{M_t}\prod^{M_t}_{i=1}\rho^{-(M_r-M_t+1)\alpha_i}
\prod_{i<j}\big(\rho^{-\alpha_i}-\rho^{-\alpha_j}\big)^2\exp\Big(\sum_i\rho^{-\alpha_i}\Big)~.
\end{align}
Replace $\boldsymbol{\lambda}$ with $\boldsymbol{\alpha}$,
(\ref{aeq:9}) yields
\begin{align}\label{aeq:11}
\mathcal{D}(\rho)=\bigg\{\int_{\mathcal{A}}\prod_{i=1}^{M_t}
(1+\frac{1}{M_t}\rho^{1-\alpha_i})^{-\theta
K\eta}f(\boldsymbol{\alpha})d\boldsymbol{\alpha}\bigg\}^{\frac{1}{\theta
K}}~,
\end{align}
where
\begin{equation}
\mathcal{A} = \Big\{\boldsymbol{\alpha}\in \mathbb{R}^{M_t}\quad :
\quad \alpha_1\geq \cdots \geq\alpha_{M_t}\Big\}\notag~.
\end{equation}
Neglecting all terms that irrelevant to the SNR exponent, we obtain
(\ref{aeq:9}) yields
\begin{align}\label{aeq:11}
\mathcal{D}(\rho)&\dot{\geq}\Bigg\{\int_{\mathcal{A}\bigcap\mathbb{R}^{M_t}{+}}\bigg(\prod_{i=1}^{M_t}
(1+\frac{1}{M_t}\rho^{1-\alpha_i})^{-\theta
K\eta}\bigg)\prod^{M_t}_{i=1}\rho^{-(2i-1+M_r-M_t)\alpha_i}d\boldsymbol{\alpha}\Bigg\}^{\frac{1}{\theta
K}} \notag\\
&\dot{=}\bigg\{\int_{\mathcal{A}\bigcap\mathbb{R}^{M_t}{+}}
\prod_{i=1}^{M_t}\rho^{-\theta K\eta(1-\alpha_i)^+}
\prod_{i=1}^{M_t}\rho^{-(2i-1+M_r-M_t)\alpha_i}d\boldsymbol{\alpha}\bigg\}^{\frac{1}{\theta
K}}\notag\\
 &\dot{=}\bigg\{\int_{\mathcal{A}\bigcap\mathbb{R}^{M_t}{+}} \prod_{i=1}^{M_t}\rho^{-(\theta K\eta(1-\alpha_i)^+
+(2i-1+M_r-M_t)\alpha_i)} \bigg\}^{\frac{1}{\theta K}}\notag\\
 &\dot{=}\rho^{\alpha(\eta)\frac{1}{\theta K}}
\end{align}
where we have used
\begin{equation}
(1+\frac{1}{M_t}\rho^{1-\alpha_i})^{-\theta K\eta}
\dot{=}\rho^{-\theta K \eta[1-\alpha_i]^+} \notag~.
\end{equation}
And
\begin{equation}
\alpha(\eta) = \inf_{\boldsymbol{\alpha}\in
\mathcal{A}\bigcap\mathbb{R}^{M_t}{+}}\sum_{i=1}^{M_r}(2i-1+M_r-M-t)\alpha_i+\theta
K\eta(1-\alpha_i)^+\notag ~.
\end{equation}
We can minimizing individual term of the summation separately by set
$\alpha_i= 0$ or $1$. We also notice that $\theta K = \tau_n$, the
buffer delay, hence we can obtain the SNR exponent of the buffered
transmission is
\begin{equation}
\alpha(\eta) =\tau_n \min
\Big\{\frac{\eta}{\tau_n},2i-1+M_r-M_t\Big\}~.
\end{equation}
\end{proof}


\end{document}